\documentclass{SciPost}

\hypersetup{
    colorlinks,
    linkcolor={red!50!black},
    citecolor={blue!50!black},
    urlcolor={blue!80!black}
}

\usepackage[bitstream-charter]{mathdesign}
\DeclareSymbolFont{usualmathcal}{OMS}{cmsy}{m}{n}
\DeclareSymbolFontAlphabet{\mathcal}{usualmathcal}

\fancypagestyle{SPstyle}{
\fancyhf{}
\lhead{\colorbox{scipostblue}%
{\bf \color{white} ~SciPost Physics Core }}
\rhead{{\bf \color{scipostdeepblue} ~Submission }}

\fancyfoot[C]{\textbf{\thepage}}
}

\usepackage{amsmath,amsfonts,amsthm,bm}
\usepackage{bbm,nccmath}
\usepackage[most]{tcolorbox}
\usepackage{array} 
\usepackage{multirow}
\usepackage{enumitem}
\usepackage{hyperref}
\usepackage{breakurl}
\usepackage{color}
\usepackage{tikz}
\usetikzlibrary{cd,calc}
\usepackage{titlesec}
\usepackage[leqno,fleqn,intlimits]{empheq}  
\usepackage{atbegshi}

\tikzset{
  symbol/.style={
    draw=none,
    every to/.append style={
      edge node={node [sloped, allow upside down, auto=false]{$#1$}}}
  }
}

\theoremstyle{plain}
\newtheorem{thm}{Theorem}

\newtheorem{lem}{Lemma}
\newtheorem{prop}[thm]{Proposition}

\newtheorem*{lem*}{Lemma}
\newtheorem*{cor*}{Corollary}
\newtheorem*{thm*}{Theorem}
\newtheorem*{prop*}{Proposition}
\theoremstyle{definition}
\newtheorem*{definition*}{Definition}

\newcommand{\defeq}{ {\kern 0.2em}:{\kern -0.5em}={\kern 0.2 em} }  
\newcommand{\eqdef}{ {\kern 0.2em}={\kern -0.5em}:{\kern 0.2 em} }  

\newcommand{\wh}{\widehat}

\DeclareMathOperator{\Id}{\mathbbm{1}} 

\newcommand{\Cmplx}{{\mathbb C}}   

\DeclareMathOperator{\Span}{\mathrm{span}}

\newcommand{\inpr}{\ilinpr}
\newcommand{\ilinpr}[2]{\langle {#1} \,|\, {#2} \rangle}   
\DeclareMathOperator{\Tr}{\mathrm{Tr}}

\newcommand{\kk}{\mathsf{k}}
\newcommand{\hh}{\mathsf{h}}

\newcommand{\cH}{\mathcal{H}}
\newcommand{\cK}{\mathcal{K}}

\newcommand{\cV}{\mathcal{V}}
\newcommand{\cW}{\mathcal{W}}

\newcommand{\cM}{\mathcal{M}}

\newcommand{\fk}{\mathfrak{k}}
\newcommand{\fK}{\mathfrak{K}}
\newcommand{\fh}{\mathfrak{h}}
\newcommand{\linfont}{\mathfrak}

\newcommand{\Blin}{\linfont{B}} 
 
\newcommand{\TC}{\Blin^1}
\newcommand{\HS}{\Blin^2}

\newcommand{\CP}{\ensuremath{\mathsf{CP}}}

\newcommand{\sB}{\mathscr{B}}

\newcommand{\sS}{\mathscr{S}}

\newcommand{\jam}{\mathfrak{J}}
\newcommand{\pam}{\overline}
 
\newcommand{\bpi}{{{\Pi}}}
\newcommand{\btheta}{{{\Theta}}}
\newcommand{\reduce}{\mathsf{Red}}

\begin{document}

\pagestyle{SPstyle}

\begin{center}{\Large \textbf{\color{scipostdeepblue}{
Existence of Kraus decomposition in infinite dimension
via strongly-convergent direct process tomography
\\
}}}\end{center}

\begin{center}\textbf{
Paul E. Lammert\textsuperscript{1$\star$},
}\end{center}

\begin{center}
  Department of Physics \\ 
Pennsylvania State University \\ University Park, PA 16802-6300
\\[\baselineskip]
$\star$ \href{mailto:lammert@psu.edu}{\small lammert@psu.edu}
\end{center}

\section*{\color{scipostdeepblue}{Abstract}}
\textbf{\boldmath{%
An algorithm is presented for Kraus decomposition of a completely positive operator over separable (countably-infinite-dimensional) Hilbert spaces, together with an elementary proof that the generated sum convergences in strong-operator topology. This improves on the standard, nonconstructive, proof by fusing the abstract problem with practical process tomography. Kraus operators are generated one-by-one, each having one more guaranteed zero matrix entry than the previous one. In this way, the stream of outputs of the algorithm provides a coherent family of Kraus decompositions of restrictions of the target CP map to ever-larger subspaces.
}}

\vspace{\baselineskip}

\noindent\textcolor{white!90!black}{%
\fbox{\parbox{0.975\linewidth}{%
\textcolor{white!40!black}{\begin{tabular}{lr}%
  \begin{minipage}{0.6\textwidth}%
    {\small Copyright attribution to authors. \newline
    This work is a submission to SciPost Physics Core. \newline
    License information to appear upon publication. \newline
    Publication information to appear upon publication.}
  \end{minipage} & \begin{minipage}{0.4\textwidth}
    {\small Received Date \newline Accepted Date \newline Published Date}%
  \end{minipage}
\end{tabular}}
}}
}


  \vspace{10pt}
  \noindent\rule{\textwidth}{1pt}
  \tableofcontents
  \noindent\rule{\textwidth}{1pt}
  \vspace{10pt}


\section{Introduction}

Existence of Kraus decompositions of $\CP$ maps
(also known as Kraus form\cite{Vacchini-Foundations}
or operator-sum representation\cite{Nielsen+Chuang,Lidar-20,Chruscinski-22})
is a fundamental result relevant to quantum information and open
quantum systems\cite{Nielsen+Chuang,
  Breuer+Petruccione,
  Benatti-09,
  Hayashi-17,
  Lidar-20,
  Chruscinski-22,
  Vacchini-Foundations}.
The statement is:
If $\Lambda$
is a completely positive operator in $\Blin(\TC(\cH),\TC(\cK))$,
with $\cH$ and $\cK$ separable Hilbert spaces,
then there are
sequences $A_1,\ldots$ of operators in $\Blin(\cH,\cK)$ such that
\begin{equation}\label{eq:Kraus}
\Lambda = \sum_{i=1}^\infty \btheta(A_i),  
\end{equation}
with
\begin{equation}
  \btheta(A) \defeq
  \;\rho\mapsto A\rho A^\dag,
\end{equation}
and the sum converges in strong-operator topology (SOT).
That is, for any $\rho$, $\sum_{i=1}^n \btheta(A_i)\rho$ converges
to $\Lambda\rho$ in trace-norm as $n\to\infty$.
Operators of form $\btheta(A)$ are the sole extreme, or pure, $\CP$ maps
in the sense that they are the ones which can be so decomposed only
into multiples of themselves. This explains the foundational interest.
Such decomposition is also of practical use as a convenient presentation
of noisy channels.
Process tomography\cite{Chuang+Nielsen-97,Jaeger-07}
extracts a suitable set of Kraus operators $A_i$
from measurements of transition probabilities.

Proof of existence of Kraus decompositions
in finite dimension is well-known\cite{Nielsen+Chuang,Breuer+Petruccione,Hayashi-17}.
However, it does not suggest the more general case.
Indeed, if spectral decomposition of hermitian operators is any guide,
one might well expect that integrals, rather than sums, are generally
required.
Of course, Kraus gave a proof\cite{Kraus-71,Kraus-83}
of the decomposition (\ref{eq:Kraus}) for separable Hilbert spaces.
Yet it is unsatisfying in at least two respects:
it relies on sophisticated results from representations of
$C^*$-algebras, and it is nonconstructive, giving not
the slightest hint how one might go about actually
finding such a decomposition or an approximation thereof.

The approach taken here does not have those shortcomings.
We present an algorithm to actually construct a Kraus decomposition,
accompanied by an elementary convergence analysis.
In the terminology of Ref.~\cite{Mohseni+Rezakhani+Lidar-08}, this is
\textit{direct} process tomography.
The only inputs are the object $\CP$ map $\Lambda$ in the form of a
capacity to obtain individual ground matrix elements,
and orthonormal bases of $\cH$ and $\cK$.
Here is a brief description; full details are contained in
Section \ref{sec:algorithm}.
Starting with $\Lambda_0$, we generate sequences
$A_1,A_2,\ldots$ of Kraus operators and $\Lambda_1,\Lambda_2,\ldots$
of remainder $\CP$ maps in such a way that 
$\Lambda_n = \btheta(A_{n+1}) + \Lambda_{n+1}$.
In the $n$-th step, $\Lambda_n$ is ``reduced'' with respect to
a new pair $(\kk_j,\hh_i)$ from the given orthonormal bases.
This means that there is a set of
guaranteed zeros in the four-dimensional array of basic matrix
elements of $\Lambda_n$ which grows with $n$ according to the
reduced pairs. (Section \ref{sec:matrix} clarifies this.)
That the procedure achieves what is claimed, namely Eq. (\ref{eq:Kraus}),
requires proving that those $A_n$ matrices actually represent
bounded operators, and that $\Lambda_n\xrightarrow{\text{SOT}} 0$.
These proofs are carried out in Section \ref{sec:proofs}.
That the $A_n$ matrices have more guaranteed zero entries
the larger $n$ is (Sec. \ref{sec:matrix}),
and that the remainders $\Lambda_n$ are $\CP$,
play important r\^{o}les there.
These facts are also of great practical interest. They imply that
intermediate results $\Lambda = \sum_{i=1}^N  \btheta(A_i) + \Lambda_N$
carry all information about the restriction (better: \textit{projection})
of $\Lambda$ to appropriate subspaces in the finite sum, with a
uniformly bounded $\CP$ remainder.
In fact, 
$\|A\|^2, \|\Lambda_n\|_{1,1} \le \|\Lambda\|_{1,1}$,
where
\begin{equation}\nonumber
\|\Gamma\|_{1,1}
\defeq \sup \{ |\Gamma\rho\|_1 \,|\, \|\rho\|_1=1 \}
\end{equation}
denotes the operator norm on \mbox{$\Blin(\TC(\cH),\TC(\cK))$}, and
\mbox{$\|\cdot\|_1$}, trace norm.

\section{Reduction algorithm}\label{sec:algorithm}

This section explains the algorithm. Proof of Prop.~\ref{prop:reduction step}
and SOT convergence of the algorithm are deferred to
Section \ref{sec:proofs}.
Some tools and notation are explained in Section \ref{sec:tools}.
The reduction step and its properties are described in
Section \ref{sec:step}, and extended into a complete algorithm in
Section \ref{sec:iteration}.
Section \ref{sec:matrix} describes what the procedure
looks like from a matrix perspective.

\subsection{Preliminaries}\label{sec:tools}

\subsubsection{Ground matrix elements}

First, we establish some notation\cite{Lammert-26}.
We deal with two Hilbert spaces $\cH$ and $\cK$ and associated
(super)operators.
denote vectors in $\cH$ and $\cK$.
This makes clear which Hilbert space vectors belong to and the font
distinguishes from other possible uses of these letters.
$\pam{\hh}$ is the dual vector canonically and conjugate-linearly related to $h$:
$\pam{\hh}(\hh') = \inpr{\hh}{\hh'}$, but we prefer the inner product
notation. $\bpi(\hh) = \hh\pam{\hh}$ is then $\|\hh\|^2$ times the orthogonal
projector onto the span of $\hh$.
We mostly deal with a bounded linear operator $A\in\Blin(\cH,\cK)$ in
the form of \textit{matrix elements} $\inpr{\kk}{A\hh}$. This \textit{does not}
imply that $\kk$ or $\hh$ belong to any particular basis
(there isn't really a matrix).
Over finite-dimensional spaces, $\Blin(\cH,\cK)$ becomes a Hilbert space
$\HS(\cH,\cK)$ under the Hilbert-Schmidt inner product
$\inpr{A}{B} = \Tr A^\dag B$. This cannot be done in infinite dimensions,
but even in that case it is perfectly safe to pair bounded operators and
trace-class operators in this way, so we will use that notation, being
careful to observe the restriction.

Density operators over $\cH$ belongs to the subset
$\TC(\cH)^+$ of positive (indicated by the `$+$' superscript)
trace-class operators on $\cH$.
The convenient way to write an eigenfunction expansion of a density
operator in our notation is $\textstyle{\sum} \bpi(\hh_i)$.

Next we come to $\Blin(\TC(\cH),\TC(\cK))$,
the bounded operators from trace-class to trace-class,
of which the $\CP$ maps comprise a subclass $\CP(\cH,\cK)$.
$\Lambda$ in $\Blin(\TC(\cH),\TC(\cK))$ is fully described by its
\textit{ground matrix elements} $\inpr{\kk'\pam{\kk}}{\Lambda\cdot \hh'\pam{\hh}}$ bearing in mind what we said above about the inner product notation.
(The dot there is just a separator to aid in parsing.
The only possible parse of $\Lambda \hh'\pam{\hh}$ is
$\Lambda(\hh'\pam{\hh})$ since $(\Lambda\hh')\pam{\hh}$ makes no sense.)

\subsubsection{Kernel relations}

We now introduce a device, \textit{kernel relations} 
to keep track of which (ground) matrix elements are zero.
For an operator $A\in\Blin(\cH,\cK)$ define 
\begin{equation}
\fK_A \defeq \{(\kk,\hh)\,|\, \inpr{\kk}{A\hh} = 0\}.
\end{equation}
It is easy to see that $\{\kk\,|\, (\kk,\hh)\in\fK_A\}$ is a closed subspace of
$\cK$ and $\{\hh\,|\, (\kk,\hh)\in\fK_A\}$ of $\cH$.
The set of subsets of $\cK\times\cH$ with this property is denoted $\sS$.
The kernel relation for a superoperator $\Lambda\in\Blin(\TC(\cH),\TC(\cK))$
is defined similarly: with $\bpi(\hh) \defeq \hh\pam{\hh}$,
\begin{equation}\label{eq:K of Lambda}
  \fK_\Lambda \defeq \{(\kk,\hh)\,|\, \inpr{\bpi(\kk)}{\Lambda\cdot\bpi(\hh)} = 0\}.
\end{equation}
To understand the significance of this, note that
\mbox{$\Lambda\in\CP(\cH,\cK)$} satisfies the
quasi-Cauchy-Schwarz inequality
\begin{equation}\label{eq:quasi-CS}
|\inpr{\kk'\pam{\kk}}{\Lambda\cdot \hh'\pam{\hh}}|^2  
\le
\inpr{\bpi(\kk')}{\Lambda\cdot \bpi(\hh')}
\inpr{\bpi(\kk)}{\Lambda\cdot \bpi(\hh)}.
\end{equation}
This is proven in Section \ref{sec:quasi-CS}.
The quasi-CS inequality together with polarization imply that 
$\fK_\Lambda$ belongs to $\sS$ if $\Lambda$ is $\CP$.
Not coincidentally, $\fK_{\btheta(A)} = \fK_A$.
For $U,V \subseteq \cK\times\cH$,
$U \vee_S V$ denotes the smallest relation in $\sS$ which contains
$U\cup V$, namely,
\mbox{$\bigcap \{W\in S\,|\, U\cup V\subseteq W\}$}.

\subsection{Basic reduction step}\label{sec:step}

\begin{prop}\label{prop:reduction step}
  Given $\Lambda\in\CP(\cH,\cK)$, $h\in\cH$ and \mbox{$k\in\cK$}
  such that $(\kk,\hh)\not\in \fK_\Lambda$,
there exists \mbox{$0 \neq A\in\Blin(\cH,\cK)$},
unique up to a phase factor, such that
$\Lambda' \defeq \Lambda - \btheta(A)$
satisfies $(\kk,\hh) \in \fK_{\Lambda'}$.
The solution is
  \begin{equation}\label{eq:A}
  \inpr{\kk'}{A\hh'} = \frac{\inpr{\kk'\pam{\kk}}{\Lambda\cdot \hh'\pam{\hh}}}
                        {\sqrt{\inpr{\bpi(\kk)}{\Lambda\cdot \bpi(\hh)}}}.
  \end{equation}
And,
\begin{subequations}\label{eq:new kernels}
  \begin{align}
 \fK_{A} & \supseteq \fK_\Lambda \label{eq:A kern} \\
\Lambda' &\in \CP(\cH,\cK) \label{eq:Lambda' is CP} \\
\fK_{\Lambda'} & = \fK_\Lambda \vee_{S} (\kk,\hh) \label{eq:Lambda' kern}
  \end{align}
\end{subequations}

\end{prop}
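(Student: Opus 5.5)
We work with the Stinespring-type (or Choi-type) positivity structure of $\Lambda$ in the elementary form of the quasi-CS inequality (\ref{eq:quasi-CS}) and its underlying positive sesquilinear form. Since $\Lambda$ is $\CP$, the form
\[
\beta\bigl((\kk',\hh'),(\kk'',\hh'')\bigr) \defeq \inpr{\kk''\pam{\kk'}}{\Lambda\cdot \hh''\pam{\hh'}}
\]
is (after fixing the correct conjugate-linear slot) a positive semidefinite sesquilinear form on the algebraic tensor product $\cK\otimes\pam{\cH}$. Positivity follows from the standard fact that for $\CP$ maps $\sum_{a,b}\bar c_a c_b\inpr{\kk_a\pam{\kk_b}}{\Lambda\cdot\hh_a\pam{\hh_b}}\ge 0$, which is the same ingredient used in Section \ref{sec:quasi-CS}. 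With this in hand, define $A$ by (\ref{eq:A}), writing $\lambda\defeq\inpr{\bpi(\kk)}{\Lambda\cdot\bpi(\hh)}>0$.

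\textbf{Step 1: boundedness of $A$.} By (\ref{eq:quasi-CS}), $|\inpr{\kk'}{A\hh'}|^2\le \inpr{\bpi(\kk')}{\Lambda\cdot\bpi(\hh')}\le\|\Lambda\|_{1,1}\|\kk'\|^2\|\hh'\|^2$. Hence the sesquilinear form $(\kk',\hh')\mapsto\inpr{\kk'}{A\hh'}$ is bounded and defines $A\in\Blin(\cH,\cK)$ with $\|A\|^2\le\|\Lambda\|_{1,1}$. We also have $\inpr{\kk}{A\hh}=\sqrt\lambda\neq0$, so $A\ne0$. Moreover (\ref{eq:A kern}) is immediate: if $(\kk',\hh')\in\fK_\Lambda$, then quasi-CS forces the numerator to vanish.

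\textbf{Step 2: $\Lambda'$ is $\CP$ and kills $(\kk,\hh)$.} A direct computation gives $\inpr{\bpi(\kk)}{\btheta(A)\bpi(\hh)}=|\inpr{\kk}{A\hh}|^2=\lambda$, so $(\kk,\hh)\in\fK_{\Lambda'}$. For complete positivity we view the matter as a Schur-complement / Gram–Schmidt step. The ground matrix elements of $\btheta(A)$ are $\beta(x,v)\overline{\beta(y,v)}/\lambda$-type rank-one pieces, where $v=\kk\otimes\pam{\hh}$. Explicitly, $\btheta(A)$ corresponds to the form $\beta(x,v)\beta(v,y)/\beta(v,v)$, i.e.\ the projection of $\beta$ onto the direction $v$. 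Then $\beta-\beta(\cdot,v)\beta(v,\cdot)/\beta(v,v)\ge0$ by Cauchy–Schwarz for positive forms, and the same holds after tensoring with $\Cmplx^n$ (replace $v$ by $v\otimes e$ appropriately, or note that the $n$-fold amplification of the form is handled identically). This gives (\ref{eq:Lambda' is CP}). The main obstacle is checking carefully that the positive-form characterization of $\CP$ on trace-class operators in infinite dimensions is exactly equivalent to positivity of this form on finite sums. I would first prove one direction (CP $\Rightarrow$ form positive) via $\Lambda\otimes\mathrm{id}_n$ applied to the rank-one positive operator $\sum_{a,b}c_a\bar c_b\, \hh_a\pam{\hh_b}\otimes e_a\bar e_b$. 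I would then prove the converse by density of finite-rank operators and continuity of $\Lambda'$, since $\Lambda'$ is bounded as a difference of bounded maps.

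\textbf{Step 3: uniqueness and (\ref{eq:Lambda' kern}).} For uniqueness, suppose $\Lambda-\btheta(B)$ is required to be $\CP$ with $(\kk,\hh)$ in its kernel relation. Then quasi-CS for $\Lambda-\btheta(B)$ forces $\inpr{\kk'\pam{\kk}}{(\Lambda-\btheta(B))\cdot\hh'\pam{\hh}}=0$ for all $\kk',\hh'$, i.e.\ $\inpr{\kk'}{B\hh'}\,\overline{\inpr{\kk}{B\hh}}=\inpr{\kk'\pam{\kk}}{\Lambda\cdot\hh'\pam{\hh}}$. Taking $\kk'=\kk,\hh'=\hh$ gives $|\inpr{\kk}{B\hh}|^2=\lambda$, so $B=e^{i\phi}A$. (This uniqueness implicitly uses $\Lambda'\in\CP$, which I read as the intended meaning of the statement.) For (\ref{eq:Lambda' kern}), the inclusion $\supseteq$ follows from $\fK_{\Lambda'}\in\sS$ (quasi-CS plus polarization, as noted before the proposition), $\fK_\Lambda\subseteq\fK_{\Lambda'}$ (by Step 1, both $\Lambda$ and $\btheta(A)$ vanish there), and $(\kk,\hh)\in\fK_{\Lambda'}$. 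For $\subseteq$, take $(\kk',\hh')\in\fK_{\Lambda'}$. Then $\inpr{\bpi(\kk')}{\Lambda\cdot\bpi(\hh')}=|\inpr{\kk'\pam\kk}{\Lambda\cdot\hh'\pam\hh}|^2/\lambda$, i.e.\ equality holds in Cauchy–Schwarz for $\beta$ between $x=\kk'\otimes\pam{\hh'}$ and $v$. Equality gives $x-cv\in$ null space of $\beta$, and I would translate this back into the statement that $(\kk',\hh')$ lies in the smallest $\sS$-relation generated by $\fK_\Lambda$ and $(\kk,\hh)$. This translation is the delicate part, and I would attempt it by decomposing $\kk'$ and $\hh'$ along $\kk$, $\hh$ and their orthogonal complements.
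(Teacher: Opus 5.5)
Your Steps 1--2 and your uniqueness argument are sound and follow the paper. The subtraction $\beta\mapsto\beta-\beta(\cdot,v)\beta(v,\cdot)/\beta(v,v)$ on the Choi form is Lemma~\ref{lem:Pos red} with $T=\jam\Lambda$, $\varphi=\kk\pam{\hh}$. Passing to infinite dimension via positivity of the form on finite-rank operators plus trace-norm density is a valid replacement for the polyprojection lemma. You are also right that uniqueness needs $\Lambda'\in\CP$ (or the row condition $T'\varphi=0$ of the lemma): read literally, any $B$ with $|\inpr{\kk}{B\hh}|^2=\inpr{\bpi(\kk)}{\Lambda\cdot\bpi(\hh)}$ would do.

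The genuine gap is the inclusion $\fK_{\Lambda'}\subseteq\fK_\Lambda\vee_S(\kk,\hh)$, which you defer as ``delicate''. No decomposition of $\kk',\hh'$ can prove it, because it is false. Take $\cH=\cK=\Cmplx^2$ with ONB $e_1,e_2$ and $\Lambda\rho=\tfrac12(\sigma_x\rho\sigma_x+\sigma_y\rho\sigma_y+\sigma_z\rho\sigma_z)=\Tr(\rho)\Id-\tfrac12\rho$. Then $\inpr{\bpi(\kk')}{\Lambda\cdot\bpi(\hh')}=\|\kk'\|^2\|\hh'\|^2-\tfrac12|\inpr{\kk'}{\hh'}|^2$ vanishes only if $\kk'=0$ or $\hh'=0$, so $\fK_\Lambda=(\{0\}\times\cH)\cup(\cK\times\{0\})$. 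For $\kk=\hh=e_1$, (\ref{eq:A}) gives $A=\sigma_z/\sqrt2$. Hence $\Lambda'\rho=\tfrac12(\sigma_x\rho\sigma_x+\sigma_y\rho\sigma_y)$ and $\inpr{\bpi(\kk')}{\Lambda'\cdot\bpi(\hh')}=|\inpr{e_1}{\kk'}|^2|\inpr{e_2}{\hh'}|^2+|\inpr{e_2}{\kk'}|^2|\inpr{e_1}{\hh'}|^2$, so in particular $(e_2,e_2)\in\fK_{\Lambda'}$. Now let $W=\fK_\Lambda\cup\{(\alpha e_1,\beta e_1):\alpha,\beta\in\Cmplx\}$. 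It lies in $\sS$, since each section is $\{0\}$, $\Cmplx e_1$, or the whole space, and it contains $\fK_\Lambda\cup\{(e_1,e_1)\}$. So $\fK_\Lambda\vee_S(e_1,e_1)\subseteq W$, while $(e_2,e_2)\notin W$.

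Your own equality-in-Cauchy--Schwarz observation shows what is really going on. Let $N_\beta$ be the null space of the Choi form (i.e.\ $\ker\jam\Lambda$ in finite dimension). Then $\fK_\Lambda=\{(\kk',\hh'):\kk'\pam{\hh'}\in N_\beta\}$, and your argument gives exactly $\fK_{\Lambda'}=\{(\kk',\hh'):\kk'\pam{\hh'}\in N_\beta+\Cmplx\,\kk\pam{\hh}\}$. The $\sS$-closure can only combine rank-one elements of $N_\beta$ with $(\kk,\hh)$ along sections. It therefore misses every rank-one $\kk'\pam{\hh'}=w+c\,\kk\pam{\hh}$ whose component $w\in N_\beta$ has rank two; in the example, $N_\beta=\Cmplx\Id$ and $e_2\pam{e_2}=\Id-e_1\pam{e_1}$.

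So the provable content of (\ref{eq:Lambda' kern}) is the inclusion $\supseteq$, which you did establish, together with this exact description. The paper's proof has the same hole: translating (\ref{eq:ker T'}) via (\ref{eq:Jam-pi}) tacitly assumes that the rank-one part of $\ker T+\Cmplx\varphi$ is generated sectionwise by that of $\ker T$ and $\varphi$. The damage is contained, since the later arguments (guaranteed zeros in Section~\ref{sec:matrix}, Lemma~\ref{lem:Lambda rho -> 0}) need only the inclusion $\supseteq$ of (\ref{eq:Lambda' kern}).
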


\subsection{Iteration}\label{sec:iteration}

Now, let
\begin{itemize}[label=$\circ$]
\item  
  $\sB_\cH = \hh_1,\hh_2,\ldots$ be an ONB for $\cH$
\item $\sB_\cK = \kk_1,\kk_2,\ldots$ be an ONB for $\cK$
\item $n\mapsto(\fk(n),\fh(n))$ be an enumeration of $\sB_\cK\times\sB_\cH$.
\end{itemize}
The idea is to perform a reduction step with each $(\fk(n),\fh(n))$ in turn.
Define
\begin{equation}
  \reduce(\kk,\hh,\Lambda) \defeq
  \begin{cases}
0, &     (\kk,\hh)\in\fK_\Lambda \\
\text{$A$ in (\ref{eq:A})}, &  \text{otherwise}.
  \end{cases}
\end{equation}
Take $\Lambda_0 \defeq \Lambda$ and for $m \ge 1$,
\begin{subequations}
  \begin{align}
  A_m & \defeq \reduce(\fk(m),\fh(m),\Lambda_{m-1}) \\
  \Lambda_m & \defeq \Lambda_{m-1} - \btheta(A_m)
  \end{align}
\end{subequations}
Thus,
\begin{equation}\label{eq:reduction n again}
\Lambda_0 = \sum_{i=1}^{n} \btheta(A_i) + \Lambda_n.
\end{equation}
Below, we prove that
\mbox{$\sum_{s=1}^n \btheta(A_s) \xrightarrow{\mathrm{SOT}}\Lambda$}
as $n\to\infty$;
equivalently, that $\Lambda_n\xrightarrow{\mathrm{SOT}} 0$.
First, we briefly consider the finite-dimensional approximations this
algorithm provides.

\subsection{The matrix view}
\label{sec:matrix}

Properties of the (putative) Kraus operators $A_n$ are easily discussed,
and visualized, in terms of matrices. One way to deal with
the more complicated nature of the $\Lambda_n$ is exemplified by
(\ref{eq:K of Lambda}). A description as matrices of matrices is also
useful: think of $\inpr{\kk_i\pam{\kk_j}}{\Lambda\cdot\hh_m\pam{\hh_n}}$
as the $(i,m)$ entry of the $(j,n)$ entry.
With this setup, we have a simple and unified description of the way
the matrices develop guaranteed zeros. 
The $(\fk(m),\fh(m))$ entries of $A_n$ and $\Lambda_n$ are
guaranteed zero as soon as $m\le n$. For $\Lambda_n$, this means
at both levels, i.e., the ``supermatrix'' has zero matrices,
and \textit{all} of its matrices have ordinary zeros, at those positions.

To make it even simpler, and more practical, consider a sensible
reduction schedule such as in the
following diagram: the numbers indicate the order in which
to take pairs $(\fk(m),\fh(m))$ in the reduction algorithm.
For instance, here, $\fh(8) = \hh_3$, $\fk(8) = \kk_2$.
\begin{center}
\begin{tikzpicture}[scale=0.75]
  \coordinate (shift) at (0,0); 
  \draw (0,5)+(shift) node{1};
  \draw (1,5)+(shift) node{2};
  \draw (0,4)+(shift) node{3};
  \draw (2,5)+(shift) node{4};
  \draw (1,4)+(shift) node{5};
  \draw (0,3)+(shift) node{6};
  \draw (3,5)+(shift) node{7};
  \draw (2,4)+(shift) node{8};
  \draw (1,3)+(shift) node{9};
  \draw (0,2)+(shift) node{10};
  \draw (4,5)+(shift) node{11};
  \draw (3,4)+(shift) node{12};
  \draw (2,3)+(shift) node{13};
  \draw (1,2)+(shift) node{14};
  \draw (0,1)+(shift) node{15};
  \draw (5,5)+(shift) node{16};
  \draw (4,4)+(shift) node{17};
  \draw (3,3)+(shift) node{18};
  \draw (2,2)+(shift) node{19};
  \draw (1,1)+(shift) node{20};
  \draw (0,0)+(shift) node{21};
  \coordinate (dotstart) at (3.2,1.8);
  \draw[fill] (0,0) + (dotstart) circle(1.5pt);
  \draw[fill] (0.3,-0.3) + (dotstart) circle(1.5pt);
  \draw[fill] (0.6,-0.6) + (dotstart) circle(1.5pt);
  \coordinate (l) at (-0.85,0);
  \draw (0,0) + (l) node{$\kk_6$};
  \draw (0,1) + (l) node{$\kk_5$};
  \draw (0,2) + (l) node{$\kk_4$};
  \draw (0,3) + (l) node{$\kk_3$};
  \draw (0,4) + (l) node{$\kk_2$};
  \draw (0,5) + (l) node{$\kk_1$};
  \coordinate (b) at (0,5.8);
  \draw (0,0) + (b) node{$\hh_1$};
  \draw (1,0) + (b) node{$\hh_2$};
  \draw (2,0) + (b) node{$\hh_3$};
  \draw (3,0) + (b) node{$\hh_4$};
  \draw (4,0) + (b) node{$\hh_5$};
  \draw (5,0) + (b) node{$\hh_6$};
  \draw[dashed] (-0.4,-0.5) -- ++(0,5.9) -- ++(6,0);
\end{tikzpicture}
\end{center}
As $n$ increases, there is a larger and larger
upper-left block of guaranteed zeros in the matrix of $A_n$,
the matrix of $\Lambda_n$, and the matrices comprising the entries of that.

\section{Deferred proofs}\label{sec:proofs}

Section \ref{sec:proof of reduction} gives the proof of
Prop.~\ref{prop:reduction step}. Using some preliminary ideas
from Section \ref{sec:proof prelims}, the proposition is reduced
to an elementary result about positive operators on Hilbert spaces
which is given in Section \ref{sec:Pos red}.
Section \ref{sec:proof Lambda_n tends to zero} fills in the last
missing piece of the full algorithm, namely 
$\Lambda_n\xrightarrow{\mathrm{SOT}} 0$.

\subsection{Preliminaries}\label{sec:proof prelims}
%
\subsubsection{$\jam$-transform}

Assume that $\dim\cH, \dim\cK < \infty$.
Then, the \mbox{$\jam$-transform}
of $\Lambda\in\Blin(\Blin(\cH),\Blin(\cK))$
can be defined via ground matrix elements as
\begin{equation}\label{eq:Jam}
\inpr{\kk'\pam{\hh'}}{\jam\Lambda\cdot \kk\pam{\hh}}
= \inpr{\kk'\pam{\kk}}{\Lambda\cdot \hh'\pam{\hh}}.
\end{equation}
The important point is
\begin{equation}\label{eq:J correspondence}
\jam\Lambda\in\Blin(\HS(\cH,\cK))^+ \;\Leftrightarrow\;
\Lambda\in\CP(\cH,\cK),
\end{equation}
i.e., $\jam\Lambda$ is a positive operator on the Hilbert space
$\HS(\cH,\cK)$ (Hilbert-Schmidt inner product) exactly when $\Lambda$
is $\CP$.
$\jam$ is a basis-free variant of the Choi-Jamio{\l}kowski isomorphism
and (\ref{eq:J correspondence}) is the corresponding version of
channel-state duality\cite{%
Choi-75,
Nielsen+Chuang,
Benatti-09,
Jiang+Luo+Fu-13,
Hayashi-17,
Vacchini-Foundations}.
For proof and thorough discussion,
see Refs.~\cite{Lammert-26,Grabowski+Kus+Marmo-07}.
A useful special case of (\ref{eq:Jam}) is
\begin{equation}\label{eq:Jam-pi}
\inpr{\bpi(\kk)}{\Lambda\cdot \bpi(\hh)}
= \inpr{\kk\pam{\hh}}{\jam\Lambda\cdot \kk\pam{\hh}}.
\end{equation}
%
\subsubsection{Lifting projections}

For discussing restrictions to finite-dimensional spaces in a compact way,
we introduce the notion of a \textit{lifted polyprojection}.
By $P$ we denote a generic \textit{polyprojection};
it combines orthoprojections on $\cH$ and $\cK$.
In the expression $P\hh$ or $P\kk$ the appropriate projection is applied.
The lifting of $P$, denoted by a hat, 
operates on (super)operators as follows:
\begin{equation}
  \wh{P}A = P \circ A \circ P, \quad
  \wh{P}\Lambda = \wh{P} \circ \Lambda \circ \wh{P}. 
\end{equation}
Again, the appropriately typed instantiation
is picked out for each occurence of $P$ or $\wh{P}$.

\subsubsection{Quasi-Cauchy-Schwarz inequality}\label{sec:quasi-CS}

Recall, the inequality (\ref{eq:quasi-CS}) in question is ($\Lambda$ is $\CP$)
\begin{equation}\nonumber
|\inpr{\kk'\pam{\kk}}{\Lambda\cdot \hh'\pam{\hh}}|^2  
\le
\inpr{\bpi(\kk')}{\Lambda\cdot \bpi(\hh')}
\inpr{\bpi(\kk)}{\Lambda\cdot \bpi(\hh)}.
\end{equation}
  Assume first that $\dim\cH,\dim\cK<\infty$.
  
  The Cauchy-Schwarz inequality applies to any positive hermitian form, e.g.,
  for a positive operator $T$, $|\inpr{x}{Ty}|^2 \le |\inpr{x}{Tx}\inpr{y}{Ty}|$,
  and $T$ here could be $\jam\Lambda$.
  Thus,
\begin{equation}\nonumber
  \begin{split}
|\inpr{\kk'\pam{\kk}}{\Lambda\cdot \hh'\pam{\hh}}|^2  
& =
| \inpr{\kk'\pam{\hh'}}{\jam\Lambda\cdot \kk\pam{\hh}}|^2 \\
& \le
|\inpr{\kk\pam{\hh}}{\jam\Lambda\cdot \kk\pam{\hh}}
\inpr{\kk'\pam{\hh'}}{\jam\Lambda\cdot \kk'\pam{\hh'}}| \\
& =
|\inpr{\kk'\pam{\kk'}}{\Lambda\cdot \hh'\pam{\hh'}}
\inpr{\kk\pam{\kk}}{\Lambda\cdot \hh\pam{\hh}}|.
\end{split}
\end{equation}
To lift the dimensionality restriction, we need merely note that
(i) the inequality is true for $\Lambda$ as soon as it is true for
$\wh{P}\Lambda$, where $P$ is a polyprojection
with $\kk,\kk',\hh,\hh'$ in its
range, and (ii) $\wh{P}\Lambda$ is $\CP$.

\subsection{Reducing a positive operator}\label{sec:Pos red}

As is explained in detail below,
the $\jam$-transform allows Prop.~\ref{prop:reduction step} to be deduced
from the following simple lemma.
\begin{lem}\label{lem:Pos red}
Given a Hilbert space $\cM$ and
$T\in\Blin(\cM)^+$, $\varphi\in\cM$ such that
$T\varphi\neq 0$,
there exists unique (modulo a phase factor) $\eta$ such that
$T' \defeq T - \bpi(\eta)$ satisfies
\begin{equation}
T'\varphi = 0 \label{eq:T'(varphi)=0}.
\end{equation}
The solution is
\begin{equation}\label{eq:red}
    \eta =
\frac{T{\varphi}}{\sqrt{\inpr{\varphi}{T\varphi}}}.
  \end{equation}
And, 
\begin{align}
   T' &\in \Blin(\cM)^+ \label{eq:T' pos} \\
   \ker T' & = \ker T + \Cmplx \varphi \label{eq:ker T'}
\end{align}
\end{lem}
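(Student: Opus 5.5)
The proof splits into four steps, taken in order: existence and uniqueness, positivity of $T'$, the inclusion of kernels, and then the reverse inclusion, which is the only real obstacle.

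\emph{Existence and uniqueness.} First note that $T\varphi\neq0$ forces $\inpr{\varphi}{T\varphi}>0$. Indeed, $\|T^{1/2}\varphi\|^2=\inpr{\varphi}{T\varphi}$, and if this vanished then $T\varphi=T^{1/2}T^{1/2}\varphi=0$. So (\ref{eq:red}) makes sense.

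Now suppose $T'\varphi=0$ with $T'=T-\bpi(\eta)$. This reads $T\varphi=\eta\inpr{\eta}{\varphi}$, so $\eta$ is a nonzero multiple of $T\varphi$. Write $\eta=c\,T\varphi$. Substituting gives $T\varphi=|c|^2\inpr{T\varphi}{\varphi}T\varphi$, hence $|c|^2=1/\inpr{\varphi}{T\varphi}$. This determines $\eta$ up to a phase, and conversely the $\eta$ of (\ref{eq:red}) satisfies (\ref{eq:T'(varphi)=0}).

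\emph{Positivity (\ref{eq:T' pos}).} For $\psi\in\cM$ we have
\[
\inpr{\psi}{T'\psi}=\inpr{\psi}{T\psi}-\frac{|\inpr{\psi}{T\varphi}|^2}{\inpr{\varphi}{T\varphi}}.
\]
This is nonnegative by the Cauchy--Schwarz inequality for the positive form $(x,y)\mapsto\inpr{x}{Ty}$, which gives $|\inpr{\psi}{T\varphi}|^2\le\inpr{\psi}{T\psi}\inpr{\varphi}{T\varphi}$. This is the same device used in Section~\ref{sec:quasi-CS}.

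\emph{Kernels, the inclusion $\supseteq$.} If $T\psi=0$, then $\inpr{T\varphi}{\psi}=\inpr{\varphi}{T\psi}=0$, so $\bpi(\eta)\psi=0$ and $T'\psi=0$. Also $T'\varphi=0$ by construction. Since $\ker T'$ is a subspace, $\ker T+\Cmplx\varphi\subseteq\ker T'$.

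\emph{Kernels, the inclusion $\subseteq$.} I expect this to be the main step. Suppose $T'\psi=0$. Then $T\psi=\eta\inpr{\eta}{\psi}=\lambda T\varphi$ for some scalar $\lambda$, so $T(\psi-\lambda\varphi)=0$. Hence $\psi=(\psi-\lambda\varphi)+\lambda\varphi\in\ker T+\Cmplx\varphi$. The sum $\ker T+\Cmplx\varphi$ is automatically closed, being a closed subspace plus a one-dimensional one, so equality (\ref{eq:ker T'}) holds without any closure issue. The same computation shows that the equality-case analysis of Cauchy--Schwarz is not needed.
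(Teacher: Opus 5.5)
Your proof is correct and matches the paper's argument in determining $\eta$, in using the Cauchy--Schwarz bound $|\inpr{\psi}{T\varphi}|^2\le\inpr{\varphi}{T\varphi}\inpr{\psi}{T\psi}$ to get $0\le T'$, and in proving $\ker T+\Cmplx\varphi\subseteq\ker T'$. The one difference is the reverse inclusion: you read $\psi-\lambda\varphi\in\ker T$ directly off the identity $T\psi=\lambda T\varphi$, which the paper also writes down, whereas the paper detours through vectors $\psi\perp\ker T+\Cmplx\varphi$. That detour rests on an intermediate claim, that such $\psi$ satisfy $\inpr{\varphi}{T\psi}=0$, which does not hold on its own, so your version is cleaner and needs no orthogonal-decomposition or closure step.
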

\begin{proof}
(\ref{eq:T'(varphi)=0}) implies that
$\eta \inpr{\eta}{\varphi} = T\varphi$. With appropriate choice of
phase, (\ref{eq:red}) follows.
Then,
\begin{equation}\label{eq:lem red 1}
 \inpr{\varphi}{T\varphi} \inpr{\psi}{\bpi(\eta)\psi}
 = |\ilinpr{\psi}{T \varphi}|^2
 \le \inpr{\varphi}{T \varphi} \inpr{\psi}{T \psi},
\end{equation}
where the inequality is Cauchy-Schwarz.
This immediately shows that $\bpi(\eta) \le T$, so that $0 \le T' \le T$.

If $T\psi=0$, then
$\inpr{\eta}{\psi}\propto\inpr{T\varphi}{\psi}
= \inpr{\varphi}{T\psi}=0$. So, $\psi\perp\eta$ and $T'\psi=0$,
i.e., $\ker T' \supseteq \ker T + \Cmplx \varphi$.
Conversely,
\begin{equation}\nonumber
  T'\psi=0
  \;\Rightarrow\;
T\psi = \tfrac{\inpr{\eta}{\psi}}{\inpr{\eta}{\varphi}} T\varphi,
\end{equation}
while (because $0 \le T$),
\begin{equation}\nonumber
\psi \perp \ker T + \Cmplx \varphi
\;\Rightarrow\;
\inpr{\eta}{\psi} \propto \inpr{\varphi}{T\psi}=0.
\end{equation}
Together, the antecedents of the preceding two displays imply $\psi=0$, 
i.e., \mbox{$\ker T' \subseteq \ker T + \Cmplx \varphi$}.
\end{proof}

\subsection{Proof of reduction step (Prop.~\ref{prop:reduction step})}
\label{sec:proof of reduction}

\subsubsection{Finite dimension}

If $\cH$ and $\cK$ are finite-dimensional, then
$\jam\Lambda$ is well-defined as a positive operator on
the Hilbert space $\HS(\cH,\cK)$.
Using $\jam$, it is straightforward to translate the
particular instantiation of Lemma~\ref{lem:Pos red}
given by
\begin{equation}\nonumber
  \begin{split}
    \cM & = \HS(\cH,\cK),\\
    T & = \jam\Lambda,\\
    \varphi & = k\pam{\hh}, \\
    \eta & = A, \\
    T' &= \jam\Lambda'
  \end{split}
\end{equation}
into Prop.~\ref{prop:reduction step}.
To translate (\ref{eq:ker T'}) into the kernel
relation (\ref{eq:Lambda' kern}),
use (\ref{eq:Jam-pi}).

\subsubsection{Countably infinite dimension}

First, we establish that $\reduce(\kk,\hh,\Lambda)$ is actually
a bounded operator. From the formula (\ref{eq:A}) for $A$'s matrix
and the quasi-CS inequality (\ref{eq:quasi-CS}), an explicit bound
\begin{equation}\nonumber
  \|A\|
  = \sup\{|\inpr{\kk'}{A\hh'}| \,:\, \|\kk'\|,\|\hh'\|\le 1\}
  \le \sqrt{ \|\Lambda_n\|_{1,1} }
\end{equation}
is immediately derived.

Now, because we work with matrix elements in Prop.~\ref{prop:reduction step}
and the kernel relations (\ref{eq:new kernels}) implicate only
restrictions to finite-dimensional subspaces, their validity follows
from the case just analyzed.
Specifically, if $P$ is a finite-dimensional polyprojection with
both $\hh$ and $\kk$ in its range,
$\wh{P}A = \reduce(\kk,\hh,\wh{P}\Lambda)$ and
$\wh{\Lambda'} = \wh{\Lambda}-\btheta(A)$.
By the just-proven finite-dimensional case, 
$\wh{P}\Lambda'$ is $\CP$. 
It remains to show that $\Lambda'$ itself is $\CP$.

Argue via a more general situation.
Let $\cV$ be a separable Hilbert space, $\rho\in\TC(\cV)^+$, and
$P_1,\ldots$ a sequence of projections onto nested finite-dimensional
subspaces $\cV_\sigma$ such that $\bigcup\cV_\sigma$ is dense in $\cV$;
The $\cH_s$ considered earlier are a special case.
Then, $\wh{P_\sigma}\rho\xrightarrow{\|\cdot\|_1}\rho$ as $\sigma\to\infty$.
[Proof:
For the special case $\rho=\bpi(v)$, we have
$\|\bpi(v)-\bpi(P_\sigma v)\|_1 \le 3 \|v\|\|P_\sigma^\perp v\|$,
but $\|P_\sigma^\perp v\|\to 0$.
For the general case, $\rho = \sum\bpi(v_i)$,
large terms of $P_\sigma \rho P_\sigma$
can be controlled by the above bound and small terms by
$\|\bpi(v)-\bpi(P_\sigma v)\|_1 \le 2 \|\bpi(v)\|_1$.]
Consequently,
\begin{equation}\label{eq:0 le rho}
  (\forall\sigma,\, 0 \le \wh{P_\sigma}\rho)
  \;\Rightarrow\;
  0\le\rho.
\end{equation}

Now, let $\cW$ be a second
separable Hilbert space with a similar sequence $(P_\alpha)$ of projections.
The following Lemma finishes the proof of Prop.~\ref{prop:reduction step}.
\begin{definition*}
For $\Upsilon\in\Blin(\TC(\cH),\TC(\cK))$, $0\le \Upsilon$ means that
  $\Upsilon$ maps $\TC(\cH)^+$ into $\TC(\cK)^+$,
  $\Upsilon \le \Psi$ that \mbox{$0 \le \Psi-\Upsilon$}.
\end{definition*}
\begin{lem*}
\begin{subequations}
  \begin{align}
  \label{eq:pos if projs are}
(\forall\alpha,\sigma,\, 0 \le \wh{P_{\alpha,\sigma}}\Gamma)
    &\;\Rightarrow\; 
         0 \le \Gamma. \\
\label{eq:CP if projs are}
(\forall\alpha,\sigma,\, \wh{P_{\alpha,\sigma}}\Gamma \text{ is } \CP)
& \;\Rightarrow\; \Gamma \text{ is }\CP.
  \end{align}
\end{subequations}
\end{lem*}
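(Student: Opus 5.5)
For (\ref{eq:pos if projs are}), fix $\rho\in\TC(\cH)^+$. The goal is to show $\Gamma\rho\in\TC(\cK)^+$ using the consequence (\ref{eq:0 le rho}) of trace-norm convergence of finite-dimensional compressions. Here $P_{\alpha,\sigma}$ is the polyprojection made of $P_\sigma$ on $\cH$ and $P_\alpha$ on $\cK$, so
\begin{equation}\nonumber
\wh{P_{\alpha,\sigma}}\Gamma\,\rho = \wh{P_\alpha}\bigl(\Gamma(\wh{P_\sigma}\rho)\bigr).
\end{equation}

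First fix $\alpha$ and let $\sigma\to\infty$. Then $\wh{P_\sigma}\rho\to\rho$ in trace norm by the bracketed argument preceding (\ref{eq:0 le rho}). Since $\Gamma$ and $\wh{P_\alpha}$ are bounded on trace class, $\wh{P_{\alpha,\sigma}}\Gamma\,\rho\to\wh{P_\alpha}(\Gamma\rho)$ in trace norm. By hypothesis each term of this sequence is positive. Positivity is preserved under trace-norm limits, since $\inpr{v}{Xv}$ is trace-norm continuous in $X$; hence $\wh{P_\alpha}(\Gamma\rho)\ge 0$. Next, $\Gamma\rho$ is hermitian because it is a trace-norm limit of hermitian operators; alternatively, hermiticity follows because all its finite compressions are hermitian. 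Applying (\ref{eq:0 le rho}) on $\cK$ to $\Gamma\rho$ then gives $\Gamma\rho\ge 0$. To be safe, I would note that (\ref{eq:0 le rho}) really uses only $\wh{P_\alpha}X\to X$ in trace norm and closedness of the positive cone, so it holds for any trace-class hermitian $X$, not just for $X$ already known to be positive.

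For (\ref{eq:CP if projs are}), I would reduce to (\ref{eq:pos if projs are}) applied to the amplification $\Gamma\otimes\Id_n$ acting on $\TC(\cH\otimes\Cmplx^n)\to\TC(\cK\otimes\Cmplx^n)$. For each $n$, use the projections $P_\sigma\otimes\Id_n$ and $P_\alpha\otimes\Id_n$. These are again nested, finite-dimensional, and have dense union, so the hypotheses of (\ref{eq:pos if projs are}) hold for the spaces $\cH\otimes\Cmplx^n$ and $\cK\otimes\Cmplx^n$. Compression commutes with amplification:
\begin{equation}\nonumber
\wh{(P_{\alpha,\sigma}\otimes\Id_n)}(\Gamma\otimes\Id_n) = (\wh{P_{\alpha,\sigma}}\Gamma)\otimes\Id_n .
\end{equation}
The right-hand side is positive because $\wh{P_{\alpha,\sigma}}\Gamma$ is $\CP$. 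Part (\ref{eq:pos if projs are}) then gives $\Gamma\otimes\Id_n\ge0$ for every $n$, which is complete positivity.

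The main obstacle I expect is bookkeeping in the first part: making sure that the iterated limit, first $\sigma$ and then $\alpha$, is legitimate. This needs boundedness of $\Gamma$ on trace class, which is part of the assumption $\Gamma\in\Blin(\TC(\cH),\TC(\cK))$. It also needs (\ref{eq:0 le rho}) to apply to $\Gamma\rho$, which is a priori only hermitian trace class rather than a density operator. A further point to check is that the trace-norm convergence $\wh{P}X\to X$ proved for positive $X$ extends to hermitian $X$ by writing $X=X_+-X_-$.
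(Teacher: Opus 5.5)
Your proposal is correct and follows the paper's proof essentially step for step. You remove $\sigma$ using $\wh{P_\sigma}\rho\to\rho$ in trace norm together with boundedness of $\Gamma$, then remove $\alpha$ via (\ref{eq:0 le rho}) on the $\cK$ side, and you get the $\CP$ statement by applying the positivity part to $\Gamma\otimes\Id_n$ with the amplified projections $P\otimes\Id_n$. Your extra care that (\ref{eq:0 le rho}) must be applied to $\Gamma\rho$, which is not yet known to be positive, addresses a point the paper glosses over; it is settled by hermiticity from the hermitian compressions, or directly by $\inpr{v}{(\Gamma\rho)v}\ge 0$ on the dense set $\bigcup_\alpha P_\alpha\cK$ plus continuity.
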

\begin{proof}
  
Given $0\le\rho$,
$(\Gamma\circ \wh{P_\sigma})\rho = \Gamma(\wh{P_\sigma}\rho)
\xrightarrow{\|\cdot\|_1} \Gamma\rho$, by (\ref{eq:0 le rho})
and boundedness of $\Gamma$. Hence,
\begin{equation}\nonumber
  (\forall\sigma,\, 0 \le \Gamma\circ\wh{P_\sigma})
  \;\Rightarrow\;
0 \le \Gamma.
\end{equation}
A very similar argument gives
\begin{equation}\nonumber
  (\forall\alpha,\, 0 \le \wh{P_\alpha}\circ\Gamma)
  \;\Rightarrow\;
0 \le \Gamma.
\end{equation}
Since
$\wh{P_{\alpha,\sigma}}\Gamma = \wh{P_\alpha}\circ\Gamma\circ\wh{P_\sigma}$,
the immediately preceding combine to yield (\ref{eq:pos if projs are}).

Now, by definition, $\Upsilon\in\CP(\cV,\cW)$ means that
\begin{equation}\label{eq:CP def}
\forall N \ge 1, \; 0 \le \Upsilon\otimes\Id_{\Blin(\Cmplx^N)}.
\end{equation}
Suppose $\Gamma\in\Blin(\Blin(\cH),\Blin(\cK))$ is such that
for every $\alpha,\sigma$,
$\wh{P_{\alpha,\sigma}}\Gamma$ is $\CP$.
So, (\ref{eq:CP def}) holds for 
$\Upsilon \defeq \wh{P_{\alpha,\sigma}}\Gamma$.
By (\ref{eq:pos if projs are}) with 
$\cV$ replaced by $\cV\otimes\Cmplx^N$,
$P_\sigma$ by $P_\sigma\otimes \Id_{\Cmplx^N}$, and similarly
with $\cW$ and $P_\alpha$, it follows that (\ref{eq:CP def})
holds with $\Upsilon\defeq \Gamma$.
That is, (\ref{eq:CP if projs are}) holds.
\end{proof}

\subsection{Proof that $\Lambda_n\xrightarrow{\mathrm{SOT}} 0$}
\label{sec:proof Lambda_n tends to zero}

Because $0\le \btheta(A_i)$, $0 \le \Lambda_n \le \Lambda$, 
Lemma \ref{lem:Lambda'<Lambda} below implies that
\mbox{$\| \Lambda_n \|_{1,1}\le \|\Lambda\|_{1,1}$} for every $n$.
This uniform bound implies that
$\|\Lambda_n\rho\|\to 0$ holds for every $\rho$
as soon as it holds for every $\rho$ in some set dense in $\TC(\cH)^+$.
$D=\cup_{s=1}^\infty \{\rho\,|\, 0\le \rho = \wh{P_s}\rho\}$
is such a set
because $\rho=\sum_{\alpha=1}^\infty\bpi(\hh_\alpha)$ can be arbitrarily well
trace-norm-approximated by $\sum_{\alpha=1}^n \bpi(\hh_\alpha)$, 
and (for $\alpha=1,\ldots,n$) each $\bpi(\hh_\alpha)$
by some $\wh{P_s}\bpi(\hh_\alpha) = \bpi(P_s\hh_\alpha)$.
Lemma \ref{lem:Lambda rho -> 0} thus completes the proof.

\begin{lem}\label{lem:Lambda'<Lambda}
$  0 \le \Gamma' \le \Gamma
  \;\Rightarrow\;
  \|\Gamma'\|_{1,1} \le \|\Gamma\|_{1,1}$.
\end{lem}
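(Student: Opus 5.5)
The plan is to show that for a positive map the operator norm $\|\cdot\|_{1,1}$ is attained, up to arbitrary precision, on positive inputs. On positive inputs the trace norm of the output is just its trace, which is monotone under the ordering $\Gamma'\le\Gamma$.

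\emph{Step 1 (positive inputs).} For $0\le\rho$ we have $0\le\Gamma'\rho\le\Gamma\rho$, because $\Gamma'\ge 0$ and $\Gamma-\Gamma'\ge0$ by the Definition. For positive trace-class operators the trace norm equals the trace, so
\begin{equation}\nonumber
\|\Gamma'\rho\|_1=\Tr\Gamma'\rho\le\Tr\Gamma\rho=\|\Gamma\rho\|_1\le\|\Gamma\|_{1,1}\|\rho\|_1 .
\end{equation}

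\emph{Step 2 (rank-one inputs).} For a general $\rho$ with $\|\rho\|_1=1$, write its singular value decomposition $\rho=\sum_i s_i\, \kk_i'\pam{\hh_i}$ with unit vectors $\hh_i,\kk_i'\in\cH$, $s_i\ge0$ and $\sum s_i=1$. The sum converges in trace norm and $\Gamma'$ is bounded, so it suffices to show
\begin{equation}\nonumber
\|\Gamma'(u\pam{v})\|_1\le\|\Gamma\|_{1,1}\quad\text{for unit } u,v .
\end{equation}
Since $\Gamma'$ is positive, it preserves adjoints: it is linear, maps hermitian operators to hermitian operators, and hermitian trace-class operators are differences of positive ones. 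One could estimate $\Gamma'(u\pam{v})$ by polarization into four terms $\Gamma'\bpi(u+i^k v)$, but that loses a constant factor. A better route uses the $2\times 2$ block structure.

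\emph{Step 3 (the key estimate, which I expect to be the main obstacle).} Mere positivity of $\Gamma'$, without complete positivity, does not directly give the rank-one bound with constant $1$. Indeed, for general positive maps $\|\Gamma\|_{1,1}$ need not be attained on positive inputs: the transpose has $\|\cdot\|_{1,1}=1$ but this is detected only by nonhermitian inputs. Fortunately, in the paper the lemma is applied to $\Gamma=\Lambda$ and $\Gamma'=\Lambda_n$, which are both \CP\ by Prop.~\ref{prop:reduction step}, and whose difference $\sum_{i\le n}\btheta(A_i)$ is \CP. So I would prove the lemma under the hypothesis that $\Gamma'$ and $\Gamma-\Gamma'$ are \CP, which is the case actually needed.

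Under that hypothesis, apply $\Gamma'\otimes\Id_{\Blin(\Cmplx^2)}$ to the positive operator $\begin{pmatrix}\bpi(u)&u\pam{v}\\ v\pam{u}&\bpi(v)\end{pmatrix}$. The output is a positive block operator $\begin{pmatrix}X&Y\\Y^\dag&Z\end{pmatrix}$ with $X=\Gamma'\bpi(u)$ and $Z=\Gamma'\bpi(v)$. For such positive blocks, $Y=X^{1/2}CZ^{1/2}$ with $\|C\|\le1$, and H\"older's inequality gives
\begin{equation}\nonumber
\|Y\|_1\le\|X^{1/2}\|_2\|Z^{1/2}\|_2=\sqrt{\Tr X\,\Tr Z} .
\end{equation}
This is a trace-norm analogue of the quasi-CS inequality (\ref{eq:quasi-CS}). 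Combined with Step 1,
\begin{equation}\nonumber
\|\Gamma'(u\pam{v})\|_1\le\sqrt{\Tr\Gamma'\bpi(u)\,\Tr\Gamma'\bpi(v)}\le\sqrt{\Tr\Gamma\bpi(u)\,\Tr\Gamma\bpi(v)}\le\|\Gamma\|_{1,1} .
\end{equation}

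\emph{Step 4 (summation).} Summing the rank-one bound with weights $s_i$ gives $\|\Gamma'\rho\|_1\le\|\Gamma\|_{1,1}$, and taking the supremum over $\rho$ proves the lemma. The block factorization $Y=X^{1/2}CZ^{1/2}$ in infinite dimension can be justified by the polyprojection reduction $\wh{P}$ already used for (\ref{eq:quasi-CS}): project to finite dimensions and pass to the limit in trace norm.
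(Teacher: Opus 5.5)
Your argument is correct for what it proves, but it does not prove the lemma as stated. In the paper, $0\le\Gamma'\le\Gamma$ means only positivity (see the Definition). You explicitly retreat to the case where $\Gamma'$ is \CP; your Step 3 in fact uses only $2$-positivity of $\Gamma'$.

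The reason you give for retreating is false. The transpose is a trace-norm isometry, so for $\rho\ge0$ one has $\|\rho^{T}\|_1=\Tr\rho=\|\rho\|_1$. It therefore attains its $\|\cdot\|_{1,1}$ norm on every positive input. You are probably thinking of the completely bounded (diamond) norm, where the transpose on $N\times N$ matrices has norm $N$.

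In general, for any positive $\Gamma'$ the Banach adjoint $(\Gamma')^{*}$ is a positive map on the unital $C^*$-algebra $\Blin(\cK)$. The Russo--Dye theorem then gives
\[
\|\Gamma'\|_{1,1}=\|(\Gamma')^{*}\|=\|(\Gamma')^{*}(\Id)\|=\sup_{\|\hh\|=1}\Tr\Gamma'\bpi(\hh),
\]
and your Step 1 with $\rho=\bpi(\hh)$ finishes the proof at once. This is exactly the fact the paper's short proof relies on. That proof is your Step 1 plus the bare assertion that the norm is approached on positive inputs; the printed text has $\Gamma$ where $\Gamma'$ is needed. So you correctly located the crux, which the paper leaves unjustified, but then misjudged it.

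You can also repair Step 3 within your own framework without complete positivity. Fix unit $u,v$ and a unitary $U$ on $\cK$. Approximate $U$ in norm by $\sum_j\lambda_jE_j$, with $|\lambda_j|=1$ and orthogonal projections $E_j$ summing to $\Id$. Each form $s_j(u,v)\defeq\Tr\bigl(E_j\,\Gamma'(u\pam{v})\bigr)$ is a positive sesquilinear form, because $\Gamma'\bpi(u)\ge0$. Two applications of Cauchy--Schwarz then give
\[
\Bigl|\sum_j\lambda_j\,s_j(u,v)\Bigr|\le\sum_j\sqrt{s_j(u,u)\,s_j(v,v)}\le\sqrt{\Tr\Gamma'\bpi(u)\,\Tr\Gamma'\bpi(v)}.
\]
Unitaries suffice to compute the trace norm, so this yields your rank-one bound from positivity alone. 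Steps 1, 2 and 4 then go through unchanged.

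Your \CP\ version does suffice for the paper's only use of the lemma, since $\Gamma'=\Lambda_n$ is \CP\ by Prop.~\ref{prop:reduction step}. As a proof of the stated lemma, however, it is incomplete, and the claim that the merely positive case can fail should be withdrawn.
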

\begin{proof}
The hypothesis means that for every $\rho\in\TC(\cH)^+$, 
$0 \le \Gamma'\rho \le \Gamma\rho$.
Therefore
$\|\Gamma' \rho \|_{1} \le \|\Gamma \rho \|_{1}$,
by the observation
that \mbox{$0 \le \rho' \le \rho \Rightarrow
  \|\rho'\|_1 \le \|\rho\|_1$} [Because trace and trace norm are equal
for positive operators].
But, for any $\epsilon > 0$, there is $0 < \rho$ such that
\mbox{$\|\Gamma\rho\|\ge (\|\Gamma\|_{1,1}-\epsilon) \|\rho\|_1$}.
\end{proof}
\begin{lem}\label{lem:Lambda rho -> 0}
  $0 \le \rho = \wh{P_s}\rho
  \;\Rightarrow\;
\Lambda_n\rho\xrightarrow{\|\cdot\|_1} 0$.
\end{lem}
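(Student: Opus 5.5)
The plan is to exploit positivity: since $\Lambda_n$ is $\CP$ (Prop.~\ref{prop:reduction step}, iterated) and $0\le\rho$, the output $\Lambda_n\rho$ is positive. Its trace norm therefore equals its trace, which we expand in the ONB $\sB_\cK$:
\begin{equation}\nonumber
\|\Lambda_n\rho\|_1 = \sum_{j=1}^\infty \inpr{\bpi(\kk_j)}{\Lambda_n\cdot\rho}.
\end{equation}
We split this sum into a head $j\le J$ and a tail $j>J$. The tail is controlled uniformly in $n$ by comparison with $\Lambda$. The head is killed for large $n$ by the guaranteed zeros coming from the reduction schedule.

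\textbf{Tail.} From (\ref{eq:reduction n again}), $\Lambda-\Lambda_n=\sum_{i\le n}\btheta(A_i)\ge 0$, so $0\le\Lambda_n\rho\le\Lambda\rho$. Hence each diagonal entry satisfies $0\le\inpr{\bpi(\kk_j)}{\Lambda_n\cdot\rho}\le\inpr{\bpi(\kk_j)}{\Lambda\cdot\rho}$. Since $\Lambda\rho$ is positive trace class, the series $\sum_j\inpr{\bpi(\kk_j)}{\Lambda\cdot\rho}$ converges. Given $\epsilon>0$, we choose $J$ with $\sum_{j>J}\inpr{\bpi(\kk_j)}{\Lambda\cdot\rho}<\epsilon$, and this bounds the tail for every $n$.

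\textbf{Head.} Because $\rho=\wh{P_s}\rho$, we can write $\rho=\sum_{a,b\le s}\rho_{ab}\,\hh_a\pam{\hh_b}$, with $P_s$ the projection onto $\Span\{\hh_1,\dots,\hh_s\}$. Each head term is then a finite combination of the ground elements $\inpr{\kk_j\pam{\kk_j}}{\Lambda_n\cdot\hh_a\pam{\hh_b}}$ with $j\le J$ and $a,b\le s$. By the quasi-CS inequality (\ref{eq:quasi-CS}) applied to the $\CP$ map $\Lambda_n$,
\begin{equation}\nonumber
|\inpr{\kk_j\pam{\kk_j}}{\Lambda_n\cdot\hh_a\pam{\hh_b}}|^2\le
\inpr{\bpi(\kk_j)}{\Lambda_n\cdot\bpi(\hh_a)}\inpr{\bpi(\kk_j)}{\Lambda_n\cdot\bpi(\hh_b)}.
\end{equation}
So it suffices that $(\kk_j,\hh_a)\in\fK_{\Lambda_n}$ for all $j\le J$, $a\le s$.

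To get this, note that at step $m$ the pair $(\fk(m),\fh(m))$ lands in $\fK_{\Lambda_m}$. In the nontrivial case this is by Prop.~\ref{prop:reduction step}; when $\reduce$ returns $0$ the pair was already in $\fK_{\Lambda_{m-1}}=\fK_{\Lambda_m}$. Moreover, by (\ref{eq:Lambda' kern}) the kernel relations are nondecreasing in $m$. Choosing $N$ so that the enumeration has visited all finitely many pairs $(\kk_j,\hh_a)$ with $j\le J$, $a\le s$ by step $N$, the head vanishes for all $n\ge N$. Combined with the tail bound, $\|\Lambda_n\rho\|_1<\epsilon$ for $n\ge N$.

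The step I expect to need the most care is the head estimate. The vanishing must hold for the off-diagonal entries $a\neq b$ in the $\cH$ indices, not just for $\bpi(\hh_a)$, and the quasi-CS inequality is what transfers the zeros there. One must also check that kernel monotonicity survives the degenerate steps where $\reduce$ returns $0$. The tail argument only needs the ordering $\Lambda_n\le\Lambda$ on positive inputs, so it is routine.
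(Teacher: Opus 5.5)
Your proof is correct and takes essentially the paper's route: the guaranteed zeros on the finite block $\cK_J\times\cH_s$ kill the head for large $n$, and $0\le\Lambda_n\rho\le\Lambda\rho$ bounds what remains by the tail of $\Tr\Lambda\rho$ in the basis $\sB_\cK$. The paper shows the whole compression $\wh{P_a}(\Lambda_m\rho)$ vanishes and uses positivity to get $\Lambda_m\rho=\wh{P_a^\perp}(\Lambda_m\rho)\le\wh{P_a^\perp}(\Lambda\rho)$, whereas you work only with the diagonal entries of the trace and make explicit the quasi-CS step (transferring zeros to entries off-diagonal in the $\cH$ indices) that the paper leaves to Section~\ref{sec:matrix}.
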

\begin{proof}
For fixed $a$, $\wh{P_{a,s}}\Lambda = 0$
for large enough $m$ (see Section \ref{sec:matrix}), hence
also $\wh{P_a}(\Lambda_m\rho) = (\wh{P_{a,s}}\Lambda_m)\rho = 0$.
But, then, eigenvectors of $\Lambda_m\rho$ to nonzero (positive)
eigenvalue are in $P_a^\perp\cK$, and
\begin{equation}\nonumber
0 \le \Lambda_m\rho = \wh{P_a^\perp}(\Lambda_m\rho)
\le \wh{P_a^\perp}(\Lambda \rho).
\end{equation}
But, $\wh{P_a^\perp}(\Lambda \rho)$ tends to zero as $a\to \infty$, 
because its trace is the tail of the convergent series for $\Tr \Lambda\rho$
adapted to the basis $\sB_\cK$.
\end{proof}

\section{Conclusion}

We have given a direct process tomography algorithm for construction
of Kraus decompositions over separable Hilbert spaces.
The proof that it works proves, \textit{a fortiori}, existence of
Kraus decompositions. Being both elementary and constructive,
this is an improvement over the standard proof.
In addition, it is practical for application, and can be useful
for large-but-finite dimension.
With $P_{a,s}$ the polyprojector onto subspaces 
\mbox{$\cH_s \defeq \Span \{\hh_1,\ldots,\hh_s\}$} 
and \mbox{$\cK_a \defeq \Span \{\kk_1,\ldots,\kk_a\}$},
\begin{equation}\label{eq:decomp of projected Lambda}
\wh{P_{a,s}}\Lambda = \sum_{i=1}^{N(a,s)} \btheta(\wh{P_{a,s}}A_i),
\end{equation}
where 
\begin{equation}\nonumber
N(a,s) \defeq \max\{n\,|\, (\fk(n),\fh(n)) \in \cK_a\times\cH_s\}.
\end{equation}
Thus, we may think of the algorithm also as delivering exact
Kraus decompositions of the projections of $\Lambda$ onto bigger
and bigger subspaces, which decompositions are, however, very special
in being maximally compatible among themselves.



\paragraph{Funding information}
This work was supported by NSF MRSEC DMR-2011839.



\begin{thebibliography}{10}
\providecommand{\url}[1]{\texttt{#1}}
\providecommand{\urlprefix}{URL }
\expandafter\ifx\csname urlstyle\endcsname\relax
  \providecommand{\doi}[1]{doi:\discretionary{}{}{}#1}\else
  \providecommand{\doi}{doi:\discretionary{}{}{}\begingroup
  \urlstyle{rm}\Url}\fi
\providecommand{\eprint}[2][]{\url{#2}}

\bibitem{Vacchini-Foundations}
B.~Vacchini,
\newblock \emph{Open quantum systems---foundations and theory},
\newblock Graduate Texts in Physics. Springer, Cham,
\newblock ISBN 978-3-031-58217-2; 978-3-031-58218-9,
\newblock \doi{10.1007/978-3-031-58218-9} (2024).

\bibitem{Nielsen+Chuang}
M.~A. Nielsen and I.~L. Chuang,
\newblock \emph{Quantum Computation and Quantum Information},
\newblock Cambridge University Press,
\newblock ISBN 978-0-521-63503-5,
\newblock \doi{10.1017/CBO9780511976667} (2000).

\bibitem{Lidar-20}
D.~A. Lidar,
\newblock \emph{Lecture notes on the theory of open quantum systems},
\newblock \doi{10.48550/arXiv.1902.00967} (2020), \eprint{1902.00967}.

\bibitem{Chruscinski-22}
D.~Chruscinski,
\newblock \emph{Dynamical maps beyond markovian regime?},
\newblock Physics Reports-Review Section of Physics Letters \textbf{992}, 1
  (2022),
\newblock \doi{10.1016/j.physrep.2022.09.003}.

\bibitem{Breuer+Petruccione}
H.-P. Breuer and F.~Petruccione,
\newblock \emph{Theory of Open Quantum Systems},
\newblock Oxford University Press,
\newblock ISBN 978-0-198-52063-4 (2002).

\bibitem{Benatti-09}
F.~Benatti,
\newblock \emph{Dynamics, information and complexity in quantum systems},
\newblock Theoretical and Mathematical Physics. Springer, Berlin,
\newblock ISBN 978-1-4020-9305-0,
\newblock \doi{10.1007/978-1-4020-9306-7} (2009).

\bibitem{Hayashi-17}
M.~Hayashi,
\newblock \emph{Quantum information theory},
\newblock Graduate Texts in Physics. Springer-Verlag, Berlin, second edn.,
\newblock ISBN 978-3-662-49723-4; 978-3-662-49725-8,
\newblock \doi{10.1007/978-3-662-49725-8} (2017).

\bibitem{Chuang+Nielsen-97}
I.~L. Chuang and M.~A. Nielsen,
\newblock \emph{Prescription for experimental determination of the dynamics of
  a quantum black box},
\newblock Journal of Modern Optics \textbf{44}(11-12), 2455 (1997),
\newblock \doi{10.1080/09500349708231894},
\newblock
  \eprint{https://www.tandfonline.com/doi/pdf/10.1080/09500349708231894}.

\bibitem{Jaeger-07}
G.~Jaeger,
\newblock \emph{Quantum information},
\newblock Springer, New York,
\newblock ISBN 978-0-387-35725-6; 0-387-35725-4,
\newblock \doi{10.1007/978-0-387-36944-0},
\newblock An overview, With a foreword by Tommaso Toffoli (2007).

\bibitem{Kraus-71}
K.~Kraus,
\newblock \emph{General state changes in quantum theory},
\newblock Annals of Physics \textbf{64}(2), 311 (1971),
\newblock \doi{10.1016/0003-4916(71)90108-4}.

\bibitem{Kraus-83}
K.~Kraus,
\newblock \emph{States, effects, and operations}, vol. 190 of \emph{Lecture
  Notes in Physics},
\newblock Springer-Verlag, Berlin,
\newblock ISBN 3-540-12732-1,
\newblock \doi{10.1007/3-540-12732-1} (1983).

\bibitem{Mohseni+Rezakhani+Lidar-08}
M.~Mohseni, A.~T. Rezakhani and D.~A. Lidar,
\newblock \emph{Quantum-process tomography: {Resource} analysis of different
  strategies},
\newblock Physical Review A \textbf{77}(3), 032322 (2008),
\newblock \doi{10.1103/PhysRevA.77.032322}.

\bibitem{Lammert-26}
P.~E. Lammert,
\newblock \emph{Notes on completely positive maps and continuous-time markovian
  cp evolution. a geometry-flavored perspective},
\newblock \doi{10.48550/arXiv.2507.11766} (2026), \eprint{2507.11766}.

\bibitem{Choi-75}
M.~Choi,
\newblock \emph{Completely positive linear maps on complex matrices},
\newblock Linear Algebra and its Applications \textbf{10}(3), 285 (1975),
\newblock \doi{10.1016/0024-3795(75)90075-0}.

\bibitem{Jiang+Luo+Fu-13}
M.~Jiang, S.~Luo and S.~Fu,
\newblock \emph{Channel-state duality},
\newblock Phys. Rev. A \textbf{87}, 022310 (2013),
\newblock \doi{10.1103/PhysRevA.87.022310}.

\bibitem{Grabowski+Kus+Marmo-07}
J.~Grabowski, M.~Kus and G.~Marmo,
\newblock \emph{On the relation between states and maps in infinite
  dimensions},
\newblock Open Systems \& Information Dynamics \textbf{14}(4), 355 (2007),
\newblock \doi{10.1007/s11080-007-9061-3}.

\end{thebibliography}


\end{document}